\newtheorem{theorem}{Theorem}
\newtheorem{conjecture}{Conjecture}
\newtheorem{definition}{Definition}
\newtheorem{corollary}{Corollary}
\DeclareMathOperator{\poi}{Poi}
\DeclareMathOperator{\bin}{Bin}
\begin{document}

\title{Resampling-free bootstrap inference for quantiles\footnote{Acknowledgement: The authors gratefully acknowledge help and feedback from Anton Muratov, Shaobo Jin, Thommy Perlinger and Claire Detilleux.}}
\date{\today}
\author{Mårten Schultzberg\thanks{\href{mailto:mschultzberg@spotify.com}{mschultzberg@spotify.com}}~}
\author{Sebastian Ankargren\thanks{\href{mailto:sebastiana@spotify.com}{sebastiana@spotify.com}}}
\affil{Spotify, Experimentation Platform Team}
\maketitle

\begin{abstract}
Bootstrap inference is a powerful tool for obtaining robust inference for quantiles and difference-in-quantiles estimators. The computationally intensive nature of bootstrap inference has made it infeasible in large-scale experiments. In this paper, the theoretical properties of the Poisson bootstrap algorithm and quantile estimators are used to derive alternative resampling-free algorithms for Poisson bootstrap inference that reduce the computational complexity substantially without additional assumptions. These findings are connected to existing literature on analytical confidence intervals for quantiles based on order statistics. The results unlock bootstrap inference for difference-in-quantiles for almost arbitrarily large samples. At Spotify, we can now easily calculate bootstrap confidence intervals for quantiles and difference-in-quantiles in A/B tests with hundreds of millions of observations. 
\end{abstract}

\section{Introduction}
The use of randomized experiments in product development has seen an enormous increase in popularity over the last decade. Modern tech companies now view experimentation, often called A/B testing, as fundamental and have tightly integrated practices around it into their product development. The vast majority of A/B testing compares two groups, treatment and control, with respect to average treatment effects through calculation of difference-in-means. These comparisons are operationalized through standard $z$-tests that are simple to perform. With the rise of A/B testing, tests that do not compare average effects are also gaining more and more interest. Difference-in-quantiles, where treatment and control quantiles are compared, is one such test, where reasons for it might be that effects are not expected, or that they are difficult to identify, on average. For example, a change could be targeting users experiencing the largest amount of buffering, which means a difference-in-quantiles comparison for, say, the 90th percentile may be of more interest than the average buffering amount experienced by users. These tests are, however, much more difficult to perform, with non-standard sampling distributions, that severely complicate implementations.

In randomized experiments, a common technique for doing inference for estimators with non-standard or intractable sampling distributions is bootstrap \citep{efron1979}. Bootstrap is a resampling-based method where the sampling distribution of an estimator is estimated by resampling with replacement from the observed sample. Bootstrap inference is known to be consistent for quantile estimators and difference-in-quantiles estimators \citep{ghosh1984, falk1989} under mild conditions on the outcome distribution. The computational intensive nature of bootstrap has made the primary use case small-sample experiments. The large-scale online experiments run by tech companies often involve millions or even hundreds of millions of users. 

Recently, two prominent approaches for bootstrapping with big data have been proposed. Both of these methods are focused on finding computationally efficient implementations of the bootstrap approach rather than reducing its complexity. The first approach, known as the Poisson bootstrap, utilizes that bootstrap samples, i.e., multinomial sampling from the original sample, can be well approximated by Poisson frequencies \citep{hanley2006}. For example, \cite{Chamandy2012} showed that the Poisson bootstrap can be implemented in the MapReduce framework \citep{dean2008}, which enables powerful parallelization on clusters of computers. \cite{Chamandy2012} implemented a non-parametric bootstrap in MapReduce for linear estimators \citep{rao1973}, like means and sums or smooth functions thereof. Moreover, \cite{Chamandy2012} showed that semi-parametric estimators of non-linear estimators such as quantiles can be estimated using similar implementations. As an alternative to bootstrapping quantiles when samples are dependent, \cite{Liu2019} developed a method based on asymptotic arguments.

Another recent approach to bootstrapping that enables efficient implementations for big data is the so-called 'Bag of little bootstraps' \citep{kleiner2014}. This approach splits the full-sample inference problem into several smaller inference problem, and then weights the results together in a consistent manner. Bag of little bootstraps is non-parametric in the sense that it applies to any estimator, while still allowing efficient parallelization implementations. 

Both the Poisson bootstrap and the Bag of little bootstrap enable big data bootstrap inference for quantiles through parallelization. In other words, the computational complexity is overcome by efficient and scalable computing. The general complexity of the Poisson bootstrap algorithm is of the order of $O(CB)$, where $C$ is the complexity of the estimator calculated in each bootstrap sample (see, e.g., \citealp{Cormen2009} for an introduction to algorithmic complexity). For quantiles, most common estimators have complexity of order $O(N)$. This follows since most quantile estimators are based on the order statistics for which the complexity is of order $O(N)$ \citep{Cormen2009}, leading to $O(NB)$ for quantile bootstrap. 

Analytical confidence intervals for quantiles have been studied for a long time. For the one-sample quantile case, simple exact and distribution-free confidence intervals for population quantiles can be constructed using only order statistics, see for example \citet[p. 159]{gibbons2014}. These approaches unlock one-sample confidence intervals for massive samples, but they are largely absent in experimentation in the tech industry. A likely reason is that these approaches do not directly extend to the two-sample difference-in-quantile case.

The focus of this paper is to reduce the complexity of  Poisson confidence intervals (CIs) for quantiles and difference-in-quantiles type estimators. Specifically, the properties of the Poisson distribution and quantile estimators based on order statistics are used to simplify the problem. We note that for one-sample problems our approach exactly reproduces the exact order-statistic-based confidence intervals such as those described by e.g. \citet[~p.159]{gibbons2014}. We then show that our approach, as opposed to the exact CIs, can be easily extended to simplify the two-sample confidence interval problem. Specifically it is shown that, for two-sample difference-in-quantile estimators, it is sufficient to sample order statistics from the original samples according to a known probability distribution. This implies that it is sufficient to order the sample once, and then sample order statistics from the ordered sample. This reduces the computational complexity from $O(NB)$ to $ O(\max(N\log(N),B))$. These findings let us perform non-parametric Poisson bootstrap inference for difference-in-quantiles estimators for almost arbitrarily large data sets.

The rest of this paper is structures as follows. Section \ref{sec:AOS-CI}, gives a brief introduction to analytical exact CIs based on order statistics. Section \ref{sec:intro_poi_boot} gives a short overview of traditional Poisson bootstrap for quantiles and difference-in-quantiles. Section \ref{sec:resampling_free_boot} introduces the first contribution of the paper, resampling-free Poisson bootstrap for quantiles. Section \ref{sec:two-sample-extension} presents the extension to the difference-in-quantiles CIs together with Monte Carlo evidence of the coverage. Finally, Section \ref{sec:disc} concludes the paper. 

\section{Analytical confidence intervals for quantiles based on order statistics} \label{sec:AOS-CI}
In this section we briefly describe analytical confidence intervals (CIs) for quantiles based on order statistics to build intuition for the bootstrap proposed in this paper, and make the connection to the literature. This type of confidence interval has been proposed by many (see e.g. \citealp{Scheffe1945,gibbons2014, david2004order}) and there are several extensions and improvements of the standard solution \citep{nyblom1992,hutson1999}. Here we present the standard CIs. 

Let $Y_i$ be a random variable with cumulative distribution function $F$, and let $\mathbf{y}=(y_1,...,y_i,...,y_N)'$ be a sample of size $N$ obtained by sampling independently from a continuous distribution $F$. Let $0<q<1$ be the population quantile of interest, and $F^{-1}(q)$ be the population value at the quantile. The order-statistic CIs are based on the following simple reasoning. When one observation is sampled from the population, the probability that the observation is smaller than $F^{-1}(q)$ is equal to $q$, and the probability that the observation is larger than $F^{-1}(q)$ is equal to $1-q$. When $N$ observations are sampled from the population independently, this enables the following distribution-free CI. Select integers $r$ and $s$ such that $1\leq r< s \leq N$ and
\begin{equation}
    P(Y_{(r)}<F^{-1}(q)<Y_{(s)}) =\sum_{i=r}^{s-1}\binom{N}{i}q^i(1-q)^i\geq 1- \alpha,
\end{equation}
then the interval $(Y_{(r)}, Y_{(s)})$ is a $1- \alpha$ CI for $F^{-1}(q)$, where $Y_{(i)}$ refers to the $i$th order statistic. There is not a unique pair $(r, s)$ that satisfies the preceding equation. Additional restrictions can be added to find a unique pair, like assigning equal probability to each tail. See, e.g., \citet[p. 158]{gibbons2014} for details. Throughout this paper we will refer to the CI given above as analytical order-statistic-based CIs (AOS-CI). 

For the purposes of this paper, the most important aspect of the AOS-CI is that they use the fact that the distribution of order-statistic indexes is independent of the outcome data distribution. The binomial probabilities used above are valid due to the properties of quantile definition rather than properties of the outcome data. An important limitation of the AOS-CIs is that they are not applicable to difference-in-quantiles; the relation between the population quantiles and order statistics does not translate to the difference-in-quantiles and the difference in order statistics. Nevertheless, in this paper we use similar arguments in a bootstrap context to enable two-sample inference for quantiles. The following section gives a brief introduction to bootstrap inference.

\section{Poisson bootstrap for quantiles}\label{sec:intro_poi_boot}
The Poisson bootstrap \citep{hanley2006} works analogously to a standard non-parametric multinomial bootstrap, but it lets the number of observations in each bootstrap vary. Let again $Y_i$ be the random variable of interest and $y_i$ be an observation of that variable. Using the Poisson bootstrap, given a sample of size $N$, we independently generate $p_i^{(b)}\sim \poi(1)$ for all $i=1,...,N$. In each boostrap sample, $y_i$ is included $p_i$ times to form the bootstrap sample $\mathbf{y}^{(b)}$. We repeat the procedure $B\in \mathbbm{Z}^+$ number of times. The size of a given bootstrap sample is $\sum_{i=1}^Np_i^{(b)}$, which is equal to $N$ only in expectation. 
Let $y_{(i)}$ represent the $i$th order statistic in the sample $\mathbf{y}$. In this paper we are studying the quantile estimator of the form given in Definition \ref{def_estimator}. 
\begin{definition}\label{def_estimator}
Define the function $g$ as
$$g[q,N]=\begin{cases}  q(N+1) & \text{ if } q(N+1)\mod 1 = 0 \\
 (1-D)\left\lfloor {q(N+1)}\right\rfloor + D\left\lceil {q(N+1)}\right\rceil & \text{ if } q(N+1)\mod 1 \neq 0
\end{cases},$$ where $D\sim Ber(q(N+1)\mod 1)$. 
Define the sample quantile estimator of quantile $q$ as
\begin{equation*}
    \hat{\tau}_q = Y_{\left(g[q,N]\right)}.
\end{equation*}
\end{definition}
The function $g$ in Definition \ref{def_estimator} can be thought of as a stochastic rounding function. If $q(N+1)$ is an integer, the ordered observation $q(N+1)$ is the sample quantile. If $q(N+1)$ is not an integer, it is randomly rounded up or down with probability proportional to the remainder and the corresponding order statistic selected. The estimator is similar to most popular quantile estimators, but instead of weighting together the two closest observations when the quantile index is non-integer, it randomly selects one of them. This formulation of the quantile estimator implies that the estimate is always an observation from the original sample, which is key for the following results.

\subsection{Poisson bootstrap inference for quantiles through resampling}
In this section, standard resampling-based Poisson bootstrap for quantiles is presented, to build intuition for the proposed alternative presented in the following section. Even though bootstrap is never needed to construct CIs for the one-sample quantile (since AOS-CI can be used directly), we focus on the one-sample case here to build intuition. The two-sample difference-in-quantiles case is based on the same logic and is achieved by simple extensions presented in Section \ref{sec:two-sample-extension}. 

A standard Poisson bootstrap CI algorithm for $\hat{\tau}_q$ is given in Algorithm \ref{poi-bootstrap-one-sample}.  
\begin{algorithm}
\caption{Algorithm for Poisson Bootstrap confidence interval for a one-sample quantile.}\label{poi-bootstrap-one-sample}
\begin{enumerate}
    \item Generate $N$ $\poi(1)$ random variables $p^{(b)}_1,\dots,p^{(b)}_i,\dots,p^{(b)}_N$
    \item include each $y_i$ observation $p_i$ times and form the bootstrap sample outcome vector $\mathbf{y}^{(b)}$
    \item Calculate the sample estimate $\hat{\tau}^b_q=y_{(g[q,\sum_{i=1}^Np_i])}^{(b)}$
    \item Repeat steps 1--3 $B$ times
    \item Return the $\alpha/2$ and $1-\alpha/2$ quantiles of the distribution of $\hat{\tau}^b$ as the two-sided $(1-\alpha)100$\% confidence interval 
\end{enumerate}
\end{algorithm}
Algorithm \ref{poi-bootstrap-one-sample} requires that each bootstrap sample vector $\mathbf{y}^{(b)}$ is realized and ordered such that the order statistic that is the sample quantile estimate can be extracted. This is a memory and computationally intensive exercise. In the following section, we exploit theoretical properties of $\hat{\tau}_q$ to provide substantial simplifications to Algorithm \ref{poi-bootstrap-one-sample} that ultimately will enable bootstrap CIs for the difference-in-quantiles case.

\section{Poisson bootstrap CIs for $\hat{\tau}_q$ without resampling}\label{sec:resampling_free_boot}
The key insight that facilitates our approach is that the estimator $\hat{\tau}_q^{(b)}$ in Definition \ref{def_estimator} applied to the $b$th bootstrap sample only uses order statistics from the original sample. The process of drawing resampling frequencies $p_i^{(b)}$ and realizing the vector $\mathbf{y}^{(b)}$ is only necessary to find what original index maps to the sample quantile in the bootstrap sample. If we knew the distribution of the random variable describing which original index is the desired sample quantile in the bootstrap sample, we could simply generate indexes from that distribution and extract the corresponding original sample order statistic directly. 

We now aim to build intuition for this index distribution. Consider an example where we are interested in a confidence interval for the median in a sample of $N=10$ observations. According to the quantile estimator in Definition \ref{def_estimator}, the median is $y_{(5)}$ or $y_{(6)}$ with equal probability. It is intuitively apparent that the middle order statistics $y_{(4)},y_{(5)},y_{(6)}, y_{(7)}$ in the original sample are more likely to be medians in the bootstrap sample. The first and the last order statistics $y_{(1)},y_{(10)}$ have little chance of being the medians in the bootstrap sample. For example,, $y_{(1)}$ will be the median in a bootstrap sample only if $p_1\geq\sum_{i=2}^Np_i$ is satisfied. If, e.g., $p_1=2$ then the sum of all 9 remaining Poisson random variables ($p_2,\dots,p_{10}$) must be smaller than or equal to $p_1=2$, which is unlikely given that $\sum_{i=2}^Np_i\sim \poi(9)$. Following this logic, it is clear that order statistics that are closer in rank to the original sample quantile have higher probability of being observed as the sample quantile in a bootstrap sample. It is easy to simulate the distribution of what index in the original sample is observed as the desired quantile across bootstrap samples. Figure \ref{fig:simulation_1_index_dist} displays this distribution for the 10th percentile in $1,000,000$ Poisson bootstrap samples in a sample of size $N=2000$. 
\begin{figure}
    \centering
    \includegraphics{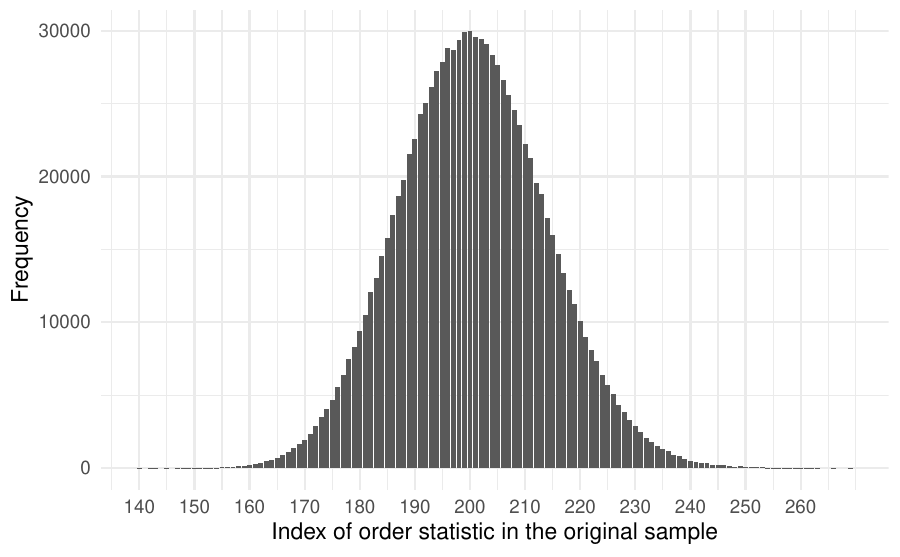}
    \caption{Distribution of index of the order statistics from a sample of $N=2000$ that became the 10th percentile over 1M Poisson bootstrap samples.}
    \label{fig:simulation_1_index_dist}
\end{figure}
If indexes could be generated directly from the index distribution, the quantile for a bootstrap sample could be obtained by simply generating an index and returning the corresponding order statistic from the original sample. In the following section, this index distribution is characterized mathematically.

\subsection{The probability distribution describing what index in the original sample is observed as the bootstrap sample quantile $q$ estimate $\hat{\tau}_q$}
Let $P_i\sim \poi(1)$ for $i=1, \dots, N$ be the frequencies used in the Poisson bootstrap. Denote $X_{<i}=\sum_{j=1}^{i-1}P_j$, $X_{>i}=\sum_{j=i+1}^{N}P_j$ and $S=X_{<i}+P_i+X_{>i}=\sum_{i=1}^NP_i$. By construction $X_{<i}\sim \poi(i-1)$ for $i>1$, $X_{>i}\sim \poi(N-i)$ for $i<N$, and $S\sim \poi(N)$. The following theorem establishes the distribution of $\psi$. 
\begin{theorem}\label{thm:index_dist}
Let $\psi\in \{1,\dots,N\}$ be the random variable that denotes what index of the original order statistics is observed as the bootstrap sample quantile $q$ estimate $\hat{\tau}_q$ (Definition \ref{def_estimator}) in a Poisson bootstrap sample. The probability mass function of $\psi$ is
\begin{multline*}
    p(\psi=i)=\\
    \sum_{n=0}^\infty p_{X_{<i},P_i,X_{>i}|S=n}\left(X_{<i}\leq q(n+1)-1, P_i>0, X_{>i}\leq (1-q)(n+1)-1|S= n\right)p_S(S=n)I(r=0)\\
    +r\sum_{n=0}^\infty p_{X_{<i},P_i,X_{>i}|S=n}\left(X_{<i}\leq q(n-1)-r, P_i>0, X_{>i}\leq (1-q)(n+1)+r-2|S=n\right)p_S\left(S=n\right)I(r\neq 0)\\
    +(1-r)\sum_{n=0}^\infty p_{X_{<i},P_i,X_{>i}|S}\left(X_{<i}\leq q(n+1)-r-1, P_i>0, X_{>i}\leq (1-q)(n+1)+r-1|S=n\right)p_S\left(S=n\right)I(r\neq 0),
\end{multline*}
where $r=q(n+1)\mod 1$.
\end{theorem}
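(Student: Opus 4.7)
The plan is to condition on the bootstrap sample size $S = n$, identify the block of positions that $y_{(i)}$ occupies in the sorted bootstrap sample, and then apply the definition of $\hat{\tau}_q$ to decide when $y_{(i)}$ is the selected order statistic. Without loss of generality I may assume the original sample is already sorted, so that $y_i = y_{(i)}$.

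The key structural observation is that, in the sorted bootstrap sample of size $S = X_{<i} + P_i + X_{>i}$, the $i-1$ smaller original values collectively occupy the first $X_{<i}$ positions, the value $y_i$ occupies the next $P_i$ positions (provided $P_i > 0$), and the $N-i$ larger values occupy the final $X_{>i}$ positions. Consequently, the event $\{\psi = i\}$ is exactly the event that $P_i > 0$ and that the chosen position $g[q,S]$ lies in $[X_{<i}+1,\ X_{<i}+P_i]$. Using $X_{<i}+P_i = S - X_{>i}$, this is equivalent to the pair of inequalities $X_{<i} \le g[q,S]-1$ and $X_{>i} \le S - g[q,S]$.

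Next I would split on the value of $r := q(n+1) \bmod 1$. When $r = 0$, the position $g[q,n]$ equals the integer $q(n+1)$ deterministically, and the two inequalities become $X_{<i} \le q(n+1)-1$ and $X_{>i} \le (1-q)(n+1)-1$, giving the first summand. When $r \neq 0$, the independent Bernoulli $D$ in Definition \ref{def_estimator} selects either $\lceil q(n+1)\rceil = q(n+1)-r+1$ with probability $r$, or $\lfloor q(n+1)\rfloor = q(n+1)-r$ with probability $1-r$; substituting each choice into the two inequalities and multiplying by the corresponding Bernoulli weight produces the second and third summands. Finally, applying the law of total probability over $n = 0, 1, 2, \dots$ with weights $p_S(S=n)$, and letting $I(r=0)$ and $I(r\neq 0)$ select the active branch for each $n$, assembles the stated formula.

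The main obstacle is really just bookkeeping: carefully translating each branch of the stochastic rounding into the correct inequalities on $X_{<i}$ and $X_{>i}$, and handling the transition between the $r = 0$ and $r \neq 0$ regimes via the indicators, while respecting the independence of $D$ from the Poisson frequencies. Beyond this case analysis nothing deep is required — the result is a direct consequence of the Poisson bootstrap construction and the definition of $g[q,N]$.
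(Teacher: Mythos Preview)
Your proposal is correct and follows essentially the same route as the paper: condition on $S=n$, observe that $y_{(i)}$ occupies positions $X_{<i}+1,\dots,X_{<i}+P_i$ in the sorted bootstrap sample, split according to whether $r=q(n+1)\bmod 1$ vanishes and, when $r\neq 0$, according to the Bernoulli rounding, and assemble via the law of total probability. If anything, your version is slightly more explicit than the paper's in justifying the inequalities through the block-position argument, but the logic is identical.
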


\begin{proof}

Let $n$ be the realization of $S$, and $q(n+1)\mod 1 = r$. If $r =0$, then $q(n+1)$ is an integer, implying that for index $i$ to be the quantile the following must be satisfied
\begin{align*}
    X_{<i}&\leq q(n+1)-1\\
    P_i&> 0\\
    X_{>i}&\leq (1-q)(n+1)-1.
\end{align*}

If $r\neq 0$ the index selected will be $\lceil q(n+1)\rceil$ with probability $r$ and $\lfloor q(n+1)\rfloor$ with probability $1-r$. In the former case, the conditions that need to be satisfied for $i$ to be the index is
\begin{align*}
    X_{<i}&\leq q(n+1)-r\\
    P_i&> 0\\
    X_{>i}&\leq (1-q)(n+1)+r-2.
\end{align*}
When $q(n+1)$ is rounded down, the conditions are instead
\begin{align*}
    X_{<i}&\leq q(n+1)-r-1\\
    P_i&> 0\\
    X_{>i}&\leq (1-q)(n+1)+r-1.
\end{align*}
The expression in the theorem then follows from the law of total probability.
\end{proof}
While Theorem \ref{thm:index_dist} presents the distribution of $\psi$, it is not a tractable distribution that lends itself to being characterized easily. We will return to the matter of practical applications in Section \ref{sec:approximating_with_binom}. For the one-sample case, Theorem \ref{thm:index_dist} can be used to obtain Poisson bootstrap confidence intervals for the quantile analytically.
We formalize this result in Corollary \ref{clr:one-sample-ci}.
\begin{corollary}\label{clr:one-sample-ci}
Denote the lower and upper confidence interval bounds that result from Algorithm \ref{poi-bootstrap-one-sample} as $C^L_{\psi, \alpha/2}$ and $C^U_{\psi, 1-\alpha/2}$. Let $\psi$ be the random variable of the order-statistic index that becomes the quantile estimate in the bootstrap sample as defined in Theorem \ref{thm:index_dist}. Denote $i_L=\max_i \{i: P(\psi\leq i)\leq\alpha/2 \}$ and $i_U=\min_i \{i: P(\psi\geq i)\geq1-\alpha/2 \}.$ Then the confidence interval for the quantile estimator $\hat{\tau}_q$ given by $(Y_{(i_L)}, Y_{(i_U)})$ has coverage $\leq 1- \alpha$.
\end{corollary}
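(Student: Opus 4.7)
The plan is to show that Algorithm \ref{poi-bootstrap-one-sample}, in the limit of infinitely many bootstrap replicates, returns exactly the interval $(y_{(i_L)}, y_{(i_U)})$, and then to read off the coverage directly from the distribution of $\psi$. The heavy lifting has already been done in Theorem \ref{thm:index_dist}; Corollary \ref{clr:one-sample-ci} is essentially a bookkeeping statement.

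First I would observe that, by Definition \ref{def_estimator}, each bootstrap estimate produced in step 3 of Algorithm \ref{poi-bootstrap-one-sample} is itself an order statistic of the \emph{original} sample. That is, there exists a random index $\psi^{(b)} \in \{1,\dots,N\}$ such that $\hat{\tau}_q^{(b)} = y_{(\psi^{(b)})}$. Because the Poisson frequencies used in different bootstrap iterations are mutually independent and the index $\psi^{(b)}$ is a deterministic function of $(P_1^{(b)},\dots,P_N^{(b)})$, the indices $\psi^{(1)},\dots,\psi^{(B)}$ are i.i.d.\ copies of the random variable $\psi$ whose law is given by Theorem \ref{thm:index_dist}.

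Second, I would transfer quantiles of the bootstrap distribution of $\hat{\tau}_q^{(b)}$ to quantiles of $\psi$ by monotonicity. Since the map $i \mapsto y_{(i)}$ is non-decreasing, the $\alpha/2$-quantile of the law of $y_{(\psi)}$ is $y_{(j_L)}$ where $j_L$ is the $\alpha/2$-quantile of $\psi$, and similarly for the upper endpoint. By Glivenko--Cantelli, the empirical quantiles used in step 5 converge a.s.\ as $B \to \infty$ to these population quantiles of $y_{(\psi)}$. Matching the quantile convention to the definitions in the statement then identifies $C^L_{A1,\alpha/2} = y_{(i_L)}$ and $C^U_{A1,1-\alpha/2} = y_{(i_U)}$ in the large-$B$ limit.

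Finally, the coverage is a direct computation:
\begin{equation*}
P\bigl(y_{(i_L)} \leq y_{(\psi)} \leq y_{(i_U)}\bigr) = P(i_L \leq \psi \leq i_U) = 1 - P(\psi < i_L) - P(\psi > i_U),
\end{equation*}
and substituting the tail inequalities from the defining sets of $i_L$ and $i_U$ yields the claimed bound. The main subtlety, and the only real obstacle, is handling the discreteness of $\psi$: the equality signs in the definitions of $i_L$ and $i_U$ encode a particular convention for discrete quantiles, and care is needed so that the tail mass accounted for on each side lines up exactly with this convention, giving the stated coverage rather than an off-by-one deviation.
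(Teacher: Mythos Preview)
Your proposal is more detailed than the paper's own proof and is essentially sound, but the emphasis differs in one place worth noting. The paper's proof is three lines: it asserts the containment
\[
C^L_{A1,\alpha/2} \ge Y_{(i_L)}, \qquad C^U_{A1,1-\alpha/2} \le Y_{(i_U)},
\]
and says the result then follows from Theorem \ref{thm:index_dist}. In other words, the paper does not argue that Algorithm \ref{poi-bootstrap-one-sample} returns \emph{exactly} $(y_{(i_L)},y_{(i_U)})$ in the large-$B$ limit; it only claims that $(Y_{(i_L)},Y_{(i_U)})$ is at least as wide as the algorithm's interval. This is precisely how the paper absorbs the discreteness issue you flag at the end: $i_L$ and $i_U$ are defined via outward rounding (a $\max$ for the lower index, a $\min$ for the upper), so the resulting interval is conservative relative to whatever discrete-quantile convention step 5 of the algorithm happens to use. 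Your route---pass to the $B\to\infty$ limit via Glivenko--Cantelli, argue for equality, and then patch up the rounding---reaches the same place but leaves the discreteness as an acknowledged loose end, whereas the containment argument handles it in one stroke.

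One further remark: your final display computes $P(i_L\le\psi\le i_U)$, which is the mass the \emph{bootstrap} distribution of $\hat\tau_q^{(b)}$ places on the interval, not a frequentist coverage probability for the true population quantile. The paper's statement and proof are equally informal on this distinction, so this is not a defect specific to your argument, but it is worth keeping the two notions separate.
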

\begin{proof}
Since
\begin{align*}
   C^L_{\psi, \alpha/2} &\geq Y_{(i_L)}\\
   C^U_{\psi, \alpha/2} &\leq Y_{(i_U)},
\end{align*}
the results follows directly from Theorem \ref{thm:index_dist}.
\end{proof}
Interestingly, Corollary \ref{clr:one-sample-ci} gives CIs that are similar in construction to the AOS-CIs (Section \ref{sec:AOS-CI}), although derived based on two distinct approaches. In the following section we present an approximations of $p(\psi=i)$ that makes it easy to generate values that can in turn be used to enable difference-in-quantiles CIs.

\subsection{Approximating the index distribution to enable fast, resampling-free bootstrap inference for difference-in-quantiles}\label{sec:approximating_with_binom}
In this section, we propose an approximation of $p(\psi=i)$ to enable resampling-free difference-in-quantiles bootstrap CIs that are easy to implement. To motivate our approximation, Figure \ref{fig:approx_bin} shows two examples of index distributions with the probability mass function of the $\bin(N+1, q)$ distribution overlaid. The binomial distribution provides an impressive fit and the fit seems to improve with increasing $N$. We have found this surprisingly simple approximation to work incredibly well, and its simplicity means it is fast and easy to work with.
\begin{figure}
     \centering
     \begin{subfigure}[b]{0.45\textwidth}
         \centering
         \includegraphics[width=\textwidth]{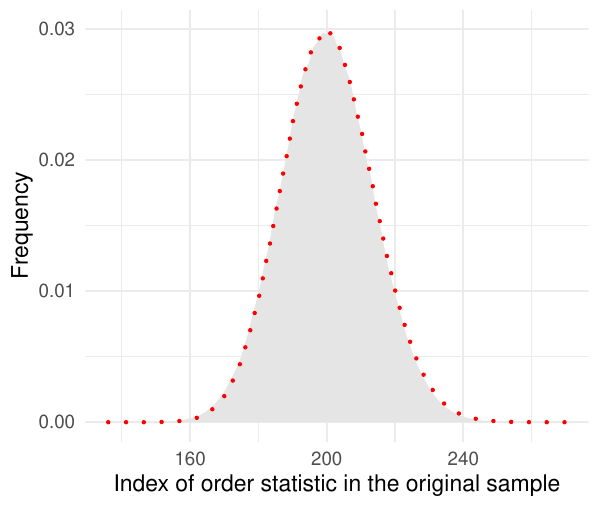}
         \caption{$N=2000, q=0.1$}
         \label{fig:approx_bin_1}
     \end{subfigure}
     \begin{subfigure}[b]{0.45\textwidth}
         \centering
         \includegraphics[width=\textwidth]{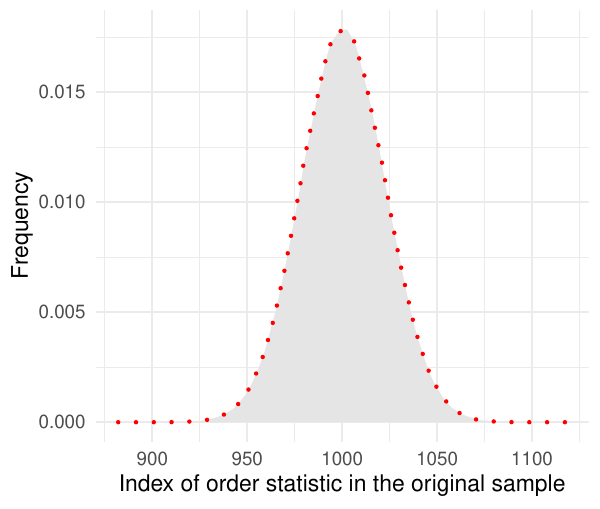}
         \caption{$N=2000, q=0.5$}
         \label{fig:approx_bin_2}
     \end{subfigure}
        \caption{Two examples of index distributions and their respective binomial approximations. The area and dotted curves show the estimated densities for the binomial approximation and index distribution, respectively.}
        \label{fig:approx_bin}
\end{figure}
We next state the use of this approximation as a conjecture, and then proceed to demonstrate its merit in Monte Carlo simulations.
\begin{conjecture}\label{thm:index_dist_approx_binom}
Let $X\sim \bin(N+1,q)$ and $G_N$ denote its cdf. Let also $H_N$ be the cdf of the index distribution as defined in Theorem \ref{thm:index_dist}. Then 
\begin{equation*}
    \sup_{x\in \{1, \dots, N\}}|G_N(x)-H_N(x)| \rightarrow 0 \text{ as } N\rightarrow \infty.
\end{equation*}
\end{conjecture}
Conjecture \ref{thm:index_dist_approx_binom} says that the bootstrap index distribution can be approximated with increased accuracy by a binomial distribution as the sample size increases, which is here defined as an asymptotically vanishing Kolmogorov-Smirnov distance between the distributions. 

If the binomial distribution is used to approximate the index distribution, the confidence interval in Corollary \ref{clr:one-sample-ci} coincides exactly with the AOS-CI using $\alpha/2$ in each tail of the index distribution. This implies that the coverage for the one-sample CI is exactly bounded by construction, but it says nothing about how similar $G_N(x)$ is to $H_N(x)$. However, the binomial-approximated index distribution and the binomial distribution used to derive AOS-CI are two quite distinct distributions. That is, they here happen to coincide exactly, but they describe fundamentally different things; the distribution of original-sample indexes of order statistics observed as quantiles in bootstrap samples versus the probability of a certain number of order statistics to be above or below the population quantile.

The distribution of original-sample indexes of order statistics observed as a given quantile in bootstrap samples is independent of the data-generating process as long as the outcome can be ordered. This means that Monte Carlo simulation can provide strong evidence that generalizes to all such data-generating processes. The setup of the simulation is the following. The number of bootstrap samples is $B=10^6$ and the sample size is set to $N\in\{100, 200, 500, 1000, 5000, 10000\}$. The quantile of interest is $q\in \{ 0.01, 0.1,0.25, 0.5\}$, which, due to symmetry, generalizes also to $q \in \{0.75, 0.9, 0.99\}$. For each combination of sample size and quantile, $10^6$ bootstrap samples are realized, and it is recorded which index from the ordered original sample that is observed as estimate of the quantile. The empirical distribution function of the indexes across the bootstrap samples are fitted. The Kolmogorov-Smirnov (KS) distance is calculated comparing the empirical bootstrap distribution to a $\bin(N+1, q)$ distribution.   

Figure \ref{fig:mc_ks} displays the Kolmogorov-Smirnov distance for each combination of quantile and sample size. In support of Conjecture \ref{thm:index_dist_approx_binom}, the KS distance is decreasing in sample size, indicating that the approximation of $\psi$ using the $\bin(N+1,q)$ distribution is improving as the sample size increases. Perhaps surprisingly, the approximation is not strictly improving as the quantile comes closer to 0.5. A likely explanation is that although the skewness and boundedness of the distribution is less heavy the closer the quantile is to 0.5, the variance in the index distribution also increases.    
\begin{figure}
    \centering
    \includegraphics{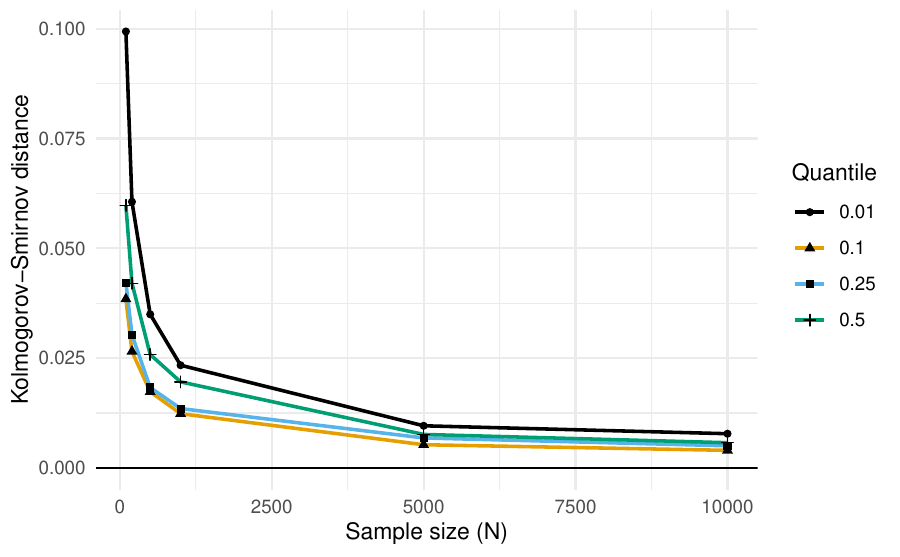}
    \caption{The Kolmogorov-Smirnov distance between the empirical index-distribution and $\bin(N+1,q)$, for quantiles 0.01, 0.1, 0.25, and 0.5 over sample sizes between 100 and 10000.}
    \label{fig:mc_ks}
\end{figure}

\section{Poisson Bootstrap CIs for difference-in-quantiles}\label{sec:two-sample-extension}
We now extend our approach to two-sample difference-in-quantiles inference. Assume that the the control and treatment groups are of sizes $N_c$ and $N_t$, respectively, such that the total sample size is $N_c+N_t$. Let the outcome of the control and treatment groups be denoted $\mathbf{y}_c=(y_{c,1},\dots,y_{c,i},\dots,y_{c,N_c})'$ and $\mathbf{y}_t=(y_{t,1},\dots,y_{t,j},\dots,y_{t,N_t})'$, respectively. Define the difference-in-quantile estimator $\hat{\delta}=\hat{\tau}^b_{t,q}-\hat{\tau}^b_{c,q}$, where subscripts $c$ and $t$ indicates the control and treatment groups, respectively. Algorithm \ref{poi-bootstrap-two-sample} defines the standard resampling-based algorithm for a Poisson bootstrap difference-in-quantile CI.
\begin{algorithm}
\caption{Algorithm for Poisson Bootstrap confidence interval for a two-sample difference-in-quantile $q$.}\label{poi-bootstrap-two-sample}
\begin{enumerate}
    \item Generate $N_c+N_t$ $\poi(1)$ random variables $p^{(b)}_{c,i}, i=1, \dots, N_c$ and $p^{(b)}_{t,j}, j=1, \dots, N_t$.
    \item include each $y_{c,i}$ observation $p_i$ times and each $y_{c,j}$ observation $p_j$ times to form the bootstrap sample outcome vectors $\mathbf{y}_c^{(b)}$ and $\mathbf{y}_t^{(b)}$.
    \item Calculate the sample estimate $\hat{\delta}^b=\hat{\tau}^b_{t,q}-\hat{\tau}^b_{c,q}$
    \item Repeat steps 1--3 $B$ times
    \item Return the $\alpha/2$ and $1-\alpha/2$ quantiles of the distribution of $\hat{\delta}^b$ as the two-sided $(1-\alpha)100$\% confidence interval. 
\end{enumerate}
\end{algorithm}
This algorithm requires generating $B$ $\poi(1)$ random variables, realize the samples and find the appropriate order statistic for each bootstrap sample, calculate the difference between treatment and control for each bootstrap sample, and finally find the quantiles of the distribution of differences. This implies a total complexity of order $O(NB)+O(B)+O(B)=O(NB)$. 

Using Theorem \ref{thm:index_dist}, it is straightforward to improve the efficiency of Algorithm \ref{poi-bootstrap-two-sample} for difference-in-quantile CIs. Here, we will utilize Conjecture \ref{thm:index_dist_approx_binom} directly to find practically applicable approximations. As before, exact analytical results can be obtained by replacing the binomial distribution with $p(\psi=i)$.

It is not possible to find a direct analogue to Corollary \ref{clr:one-sample-ci} for the two-sample difference-in-quantiles CI. While the within-sample distribution of indexes is independent of the outcome data, the distribution of the difference between samples (i.e., the difference-in-quantile estimate) is not. 
 Instead, Theorem \ref{thm:index_dist} together with Conjecture \ref{thm:index_dist_approx_binom} can be applied to generate $B$ bootstrap quantile estimates for each sample, i.e., $(\hat{\tau}^{(1)}_{q,t},...,\hat{\tau}^{(B)}_{q,t})$, and $(\hat{\tau}^{(1)}_{q,c},...,\hat{\tau}^{(B)}_{q,c})$ for treatment and control, respectively. The bootstrap distribution of the difference-in-quantiles can be directly obtained by simply taking the difference between these two vectors.     
 Let $\mathbf{a}[\mathbf{v}]$ denote extraction of elements from the vector $\mathbf{a}$ according to the vector of indexes $\mathbf{v}$ where elements in $\mathbf{v}$ are bounded between 1 and the length of $\mathbf{a}$. Algorithm \ref{poi-bootstrap-two-sample-fast} displays an efficient algorithm for obtaining CIs for the difference-in-quantiles.
\begin{algorithm}
\caption{Algorithm for Poisson Bootstrap confidence interval for a two-sample difference-in-quantile $q$.}\label{poi-bootstrap-two-sample-fast}
\begin{enumerate}
    \item Generate $B$ random numbers from $\bin(N_c+1,q)$ and $B$ random numbers from $\bin(N_t+1,q)$ and save them in two vectors $\mathbf{I}_c$ and $\mathbf{I}_t$, respectively. 
    \item Order the outcome vectors $\tilde{\mathbf{y}}_{c}=(y_{c,(1)},\dots,y_{c,(N_c)})'$ and $\tilde{\mathbf{y}}_{t,}=(y_{t,(1)},\dots,y_{c,(N_t)})'$.
    \item Calculate the vector of difference-in-quantiles as $\boldsymbol{\hat{\tau}}=\tilde{\mathbf{y}}_{t}[\mathbf{I}_t]-\tilde{\mathbf{y}}_{c}[\mathbf{I}_c]$
    \item Return the $\alpha/2$ and $1-\alpha/2$ quantiles of $\boldsymbol{\hat{\tau}}$ as the two-sided $(1-\alpha)100$\% confidence interval for the difference-in-quantiles.  
\end{enumerate}
\end{algorithm}
Algorithm \ref{poi-bootstrap-two-sample-fast} generates $2B$ binomial random numbers, sorts two vectors of lengths $N_c$ and
$N_t$, extracts 2B numbers from arrays and calculates the difference, and finally finds the quantiles of the distribution. This leads to an overall complexity of order $O(2B)+O(N_c\log(N_c))+O(N_t\log(N_t))+O(2B)+O(B)+O(B)=\max(B, N_c\log(N_c), N_t\log(N_t))$. Since $\log(N)<B$ for all relevant pairs of $N$ and $B$, Algorithm \ref{poi-bootstrap-two-sample-fast} has lower complexity than Algorithm \ref{poi-bootstrap-two-sample} in all reasonable applications.

\subsection{Monte Carlo simulations of the CI coverage for Algorithm \ref{poi-bootstrap-two-sample-fast}}\label{sec:monte_carlo}
In this section, the coverage of the confidence intervals resulting from algorithm \ref{poi-bootstrap-two-sample-fast} are studied using Monte Carlo simulation. The algorithms are implemented in \texttt{Julia} version 1.6.3 \citep{bezanson2017julia}, and the code for the algorithms and the Monte Carlo simulations can be found here \url{https://github.com/MSchultzberg/fast_quantile_bootstrap}.

The data-generating process is similar to the previous simulation. The number of Monte Carlo replications is $10^4$. For each replication, two samples of $N_t=N_c=10^5$, respectively, are generated from a standard normal distribution. The number of bootstrap samples for each Monte Carlo replication is $B=10^5$ and the two-sided 95\% confidence interval is returned. The study is repeated for the quantiles 0.01, 0.1, 0.25, and 0.5. The coverage rate is the proportion of the CIs that covered the true population difference-in-quantiles, i.e., zero. To quantify the error due to a finite number of Monte Carlo replications, the two-sided 95\% confidence intervals of the coverage rate (using standard normal approximation of the proportion) are again presented with the results. 

Table \ref{tab:mc-coverage-two-sample-fast} displays the results from the Monte Carlo simulation.
\begin{table}[ht]
\centering
\begin{tabular}{rlll}
\toprule
 &Empirical&\multicolumn{2}{c}{95\% CI}\\\cline{3-4}
$q$ & coverage & Lower & Upper \\ 
\midrule 
0.01 & 0.953 & 0.949 & 0.957 \\ 
  0.10 & 0.949 & 0.944 & 0.953 \\ 
  0.25 & 0.950 & 0.946 & 0.955 \\ 
  0.50 & 0.949 & 0.945 & 0.954 \\ 
   \hline
\end{tabular}
\caption{Empirical coverage rate for the confidence intervals produced by Algorithm \ref{poi-bootstrap-two-sample-fast} for the difference-in-quantiles for quantiles 0.01, 0.1, 0.25, and, 0.25, for sample size $10^5$ with $10^5$ bootstrap samples over 10000 replications.}\label{tab:mc-coverage-two-sample-fast}
\end{table}
Again, it is clear that the coverage is close to the intended 95\% for all quantiles, with no observable systematic deviations.  

\subsection{Time and memory simulation comparisons}
This section presents memory and time consumption comparisons to build intuition for the impact of the reduction in complexity enabled by Theorem \ref{thm:index_dist} and Conjecture \ref{thm:index_dist_approx_binom}. The comparisons are between Algorithm \ref{poi-bootstrap-two-sample} and \ref{poi-bootstrap-two-sample-fast} implemented in \texttt{Julia} version 1.6.3 \citep{bezanson2017julia} and benchmarked using the \texttt{BenchmarkTools} package \citep{BenchmarkTools2016}.

The setup for the comparison is the following. Two samples of floats are generated of size 1000 each. $B$ is set to 10000. The setup is selected to enable 100 evaluations of Algorithm \ref{poi-bootstrap-two-sample} within around 200 seconds on a local machine. The results are displayed in Table \ref{tab:time_and_speed_sim}. 
\begin{table}[hbt!]
    \centering
    \begin{tabular}{r|llll}
         &Min time&Median time &Max time & Memory usage  \\
         \hline
         Algorithm \ref{poi-bootstrap-two-sample}& 1726 ms&1821ms &1902ms&2.39 GiB\\
         Algorithm \ref{poi-bootstrap-two-sample-fast} & 2.055ms & 2.214ms&3.502ms & 407.08 KiB \\
         \hline
    \end{tabular}
    \caption{Time and mempry consumption comparison between a standard Poisson bootstrap algorithm (Algorithm \ref{poi-bootstrap-two-sample}) for a difference-in-quantiles CIs and the corresponding proposed binomial-approximated Poisson bootstrap algorithm (Algorithm \ref{poi-bootstrap-two-sample-fast}).}
    \label{tab:time_and_speed_sim}
\end{table}
Clearly, Algorithm \ref{poi-bootstrap-two-sample-fast} outperforms Algorithm \ref{poi-bootstrap-two-sample} both in terms of memory and speed already for small samples and moderately small $B$. The results presented in Table \ref{tab:time_and_speed_sim}, together with the simulation results in Section \ref{sec:MC-coverage}, establishes the utility and practical implications that follows from the theoretical results.  

\section{Discussion and conclusion}\label{sec:disc}
In this paper we exploit the properties of quantile estimators coupled with a Poisson bootstrap sampling scheme to derive computationally simple bootstrap inference algorithms to make difference-in-quantiles inference feasible in large-scale experimentation. It turns out that for the quantile estimator we employ, no resampling is necessary. Instead, the theoretical distribution of the indexes of order statistics in the original sample that are observed as the quantile estimate in the bootstrap sample can be derived and used directly. The traditional algorithm is built around generating Poisson random variables for each observation in the original sample, realizing the bootstrap sample, and selecting the order statistic in the bootstrap sample that is closest to the desired quantile. In this paper we show that it is possible, due to the known properties of the Poisson bootstrap sampling mechanism, to describe probabilistically which order statistic in the original sample is observed as the desired quantile in a bootstrap sample. This effectively bypasses the need for realizing each bootstrap sample. In addition, we show that the index distribution, which has an analytically intractable exact distribution, is well approximated by a binomial distribution that simplifies implementation dramatically. 

Together our findings enables bootstrap inference for quantiles and difference-in-quantiles in large-scale experiments without the need for intricate parallelization implementations. In fact, a simple SQL query coupled with a Python or R notebook is sufficient for even the largest experiments with even hundreds of millions of users. We hope that this will enable fast and robust inference for quantiles for many large-scale experimenters. 

We leave for future research to study the properties of $p(\psi=i)$ in more detail. If the distribution could be exactly or approximately characterized in a manner that made generation of random numbers straightforward, that would open up faster bootstrap algorithms also for smaller sample sizes. This might also enable proving or narrowing Conjecture \ref{thm:index_dist_approx_binom}.

\pagebreak
\bibliography{library}
\end{document}